\newcommand{\cent}[0]{\mbox{\textcent}}
\newcommand{\dollar}{\$}
\newcommand{\mymatrix}[2]{\left( \begin{array}{#1} #2 \end{array} \right)}
\newcommand{\mypar}[1]{\left( #1 \right)}
\newcommand{\modp}{\mathtt{MOD_p}}
\begin{document}
\title{Quantum hashing algorithm implementation}
%
%
\author{Aliya Khadieva\inst{1,2}\orcidID{0000-0003-4125-2151}}
\authorrunning{A. Khadieva}
%
\institute{
Kazan Federal University, Kazan, Russia  \and University of Latvia, R\={\i}ga, Latvia 
\\
\email{aliya.khadi@gmail.com}
}
\maketitle              
\begin{abstract}
We implement a  quantum hashing algorithm which is based on a fingerprinting technique presented by Ambainis and Frievalds, 1988, on gate-based quantum computers. This algorithm is based on  a quantum finite automaton for a unary language $\mathtt{MOD_p}$, where $ \mathtt{MOD_p} = \{ a^{i \cdot p} \mid i \geq 0 \} $,  for any prime number $p$. We consider 16-qubit and 27-qubit IBMQ computers with the special graphs of qubits representing nearest neighbor architecture that is not Linear Nearest Neighbor (LNN) one. We optimize quantum circuits for the quantum hashing algorithm with respect to minimizing the number of control operators as the most expensive ones.  We apply the same approach for an optimized circuit implementation of Quantum Fourier Transform (QFT) operation on the aforementioned machines because QFT and hashing circuits have common parts.

\keywords{quantum finite automaton  \and quantum hashing  \and quantum algorithm implementation \and quantum Fourier transform\and QFT\and QFT optimized implementation \and quantum circuits\and Linear Nearest Neighbor \and IBMQ \and quantum computation}
\end{abstract}
\section{Introduction}
\label{sec:intro}

Quantum fingerprinting and quantum hashing technique are well-known computational techniques that allow to map a large data item to a much shorter string, its fingerprint that identifies the original data (with high probability).
The probabilistic technique was explored by  Freeholds \cite{Fre79}. Later Ambainis and Frievalds \cite{af98} developed quantum counterparts of the technique which allowed them to develop a quantum finite automaton with an exponentially smaller size than any classical randomized automaton for a specific language $\mathtt{MOD_p}$, where $ \mathtt{MOD_p} = \{ a^{i \cdot p} \mid i \geq 0 \} $  for any prime number $p$. 
 After that, Buhrman et al. in \cite{bcwd2001} presented an explicit definition of the quantum fingerprinting for constructing an efficient quantum communication protocol for equality testing.
 Ambainis and Nahimovs improved the quantum automaton for  $\mathtt{MOD_p}$ in \cite{an2008,an2009}.
The technique was widely used in different areas like stream processing algorithms \cite{l2009}, online algorithms \cite{kk2019disj}, automata \cite{af98}, branching programs \cite{av2013,kkk2022}, cryptography \cite{aavz2016} and others.

Here we discuss circuit implementation of quantum hashing for real quantum devices. For this reason, we focus on implementation of the automaton for $\mathtt{MOD_p}$ language because it contains the main idea of the quantum fingerprinting technique but simple enough to understand it. 



However, current quantum machines are the noisy intermediate-scale quantum machines ~\cite{preskill2018quantum}. One of the main issues for real quantum computers is computational noise caused by quantum operators and measurements. So, quantum algorithms with big depth often output incorrect results due to errors caused by noise. Therefore, the minimization of a depth of circuits is one of the main goals for researchers. Especially, it is important to minimize the number of two-qubit gates that are the most expensive gates for implementation. The CNOT gate is a basic two-qubit gate which is used in many quantum devices as a two-qubit gate. So, we focus on the number of 
CNOT gates when estimate efficiency of the circuit.

In this work, we focus on a short circuit for approximate implementation of the hashing algorithm (pseudo rotation approach) which was introduced in \cite{kalis18,ziiatdinov2023gaps}.
This approach allows us to exponentially reduce the number of CNOT gates, but , as computational experiments show  \cite{ziiatdinov2023gaps}, the probability of error increases slightly.   At the same time, the exponential reduction of CNOT gates gives us hope to obtain a reasonable implementation on quantum devices.

Moreover, for quantum devices like IBMQ there are some restrictions to apply CNOT gates or other two-qubit gates. Namely, we cannot apply the gate to any pair of qubits. There is a graph that shows which pairs of qubits can interact when two-qubit operators are applied. This graph represents a topology of qubits. 

When researchers pay attention for topology, they consider the linear topology called linear nearest neighbor (LNN) architecture ~\cite{saeedi2011synthesis,o2019generalized,bako2022near}. The LNN architecture for quantum hashing was considered in \cite{ksy2024}. For instance, Baidu's quantum computer or 5-qubit IBMQ quantum computer like $\mathtt{ibmq}\_\mathtt{manila}$ have such architecture.
Nevertheless, more general graph can be an issue for techniques which were applied in the LNN case. At the same time, the techniques for the general graph are based on the LNN case.  As an example, the quantum circuit for quantum hashing in general case (not the short circuit for approximate implementation that is considered in this paper) was developed in \cite{zkk2023}. It is based on implementation of uniformly-controlled gate in the LNN architecture \cite{mottonen2005decompositions}.

In this work, we consider a short circuit for approximate implementation of the hashing algorithm (pseudo rotation approach). Particularly, we use it to  implement the automaton for $\mathtt{MOD_p}$ language on real devices which are 16-qubit and 27-qubit IBMQ machines. They are Falcon r4p architecture for 16-qubit device (that is presented by $\mathtt{ibmq}\_\mathtt{guadelupe}$), and Falcon 5.11 architecture for  27-qubit device (that is presented by $\mathtt{ibmq}\_\mathtt{auckland}$ or $\mathtt{ibmq}\_\mathtt{cairo}$).
These quantum devices have specific graphs for the qubits topology  that are different from LNN architecture.
We propose an  approach for constructing efficient quantum circuits with reduced  number of CNOT gates for implementing on quantum devices with the aforementioned  topologies.  

We show that our approach results in CNOT-cost that is three times more efficient than the original approach after a standard Qiskit optimization procedure called ``transpillation''. This advantage raises with the length of the input. 

Note that the short circuit for approximate implementation of the hashing algorithm has many common moments with the circuit for Quantum Fourier Transform (QFT) algorithm. So, we apply our approach for QFT algorithm implementation. Notably, the quantum circuit for the QFT algorithm is represented by controlled phase  gates $R_k$ followed by one-qubit Hadamard operators. 

In \cite{park2023reducing}, authors propose a quantum circuit with reduced number CNOT gates implementing the QFT algorithm in LNN architecture. The CNOT-cost of their optimized circuit is $n^2+n-4$, where $n$ is the number of qubits. At the same time, their approach cannot be applied for 
non-LNN architecture directly, even for other nearest-neighbour architectures. 

In this paper, we present an optimized circuit for the QFT algorithm implementation on IBMQ quantum devices which are the 16-qubit Falcon r4p architecture  ($\mathtt{ibmq}\_\mathtt{guadelupe}$), and 27-qubit Falcon 5.11 architecture ($\mathtt{ibmq}\_\mathtt{auckland}$ or $\mathtt{ibmq}\_\mathtt{cairo}$). We obtain the circuit with $1.5n^2$ CNOT-cost.

The paper is organized as follows. Section \ref{sec:pre} presents the basic notations, definitions and the QFA algorithm for $\modp$ language. Here, we also give brief explanation of the pseudo rotation approach for hashing algorithm implementation. In Section \ref{sec:real-hardware}, we propose the optimized circuit for implementing the hashing algorithm on  machines with 16-qubit Falcon r4p  and 27-qubit Falcon 5.11 architectures. In Section \ref{sec:qft}, we present the optimized circuit for implementing the QFT algorithm on these machines. Finally, we summarise all the results and highlight some open questions in Section \ref{sec:con}.

\section{Preliminaries}
\label{sec:pre}

In this section, we give a brief review on quantum finite automata (QFAs) and present the QFA algorithm recognizing $\modp$ as an algorithm which is a base for a quantum hashing algorithms. We refer the reader to \cite{NC00} for the basics of quantum computing and to~\cite{AY21} for further details on QFAs.

\subsection{Quantum Finite Automaton} 
The QFA algorithm for $\modp$ \cite{af98} lies on a basis of quantum hashing algorithm as a kernel method computing the hash of the input word. The Moore-Crutchfield model of QFA \cite{MC00} is used for construction this algorithm; we refer to this model as QFA throughout the paper. Formally, an $n$-state QFA is a 5-tuple 
\[
	M=(\Sigma,Q,\{ U_\sigma \mid \sigma \in \Sigma \cup \{\cent,\dollar \}\},q_I,Q_a ),
\] where 

\begin{itemize}
	\item $\Sigma$ is the input alphabet not containing the left and right end-markers (resp., $\cent$ and $\dollar$),
	\item $Q = \{q_1,\ldots,q_n\}$ is the set of states,
	\item $U_\sigma \in \mathbb{C}^{n \times n}$ is the unitary transition matrix for symbol $\sigma$,
	\item $q_I \in Q$ is the initial state, and,
	\item $Q_a \subseteq Q$ is the set of accepting states.
\end{itemize}

From the set of all strings $\Sigma^*$, let $x$ be the given input with $l$ symbols, i.e, $x = x_1 x_2 \cdots x_l$. The computation of $M$ is traced by an $n$-dimensional complex-valued vector called the quantum state. At the beginning of computation, $M$ is in quantum state $\ket{v_0} = \ket{q_I} $, having zeros except $I$-th entry which is 1. The input $x$ is processed by $M$ one symbol at a time which can be traced by applying the transition matrix of each scanned symbol including the end-markers. Thus, the final quantum state is 
$
\ket{v_f} = U_\dollar U_{x_l} U _{x_{l-1}} \cdots U_{x_1} U_{\cent} \ket{v_0}.
$ After processing the whole input, a measurement in the computational basis is performed. That is, if the $j$-th entry of $\ket{v_f}$ is $a_j$, then the state $q_j$ is observed with probability $|a_j|^2$. Thus, the input $x$ is accepted with the probability of observing an accepting state, i.e., $ \sum_{q_j \in Q_a} |a_j|^2 $.

A language $L \subseteq \Sigma^*$ is recognized by $M$ with bounded error if and only if there exists an error bound $\epsilon \in [0,1/2) $ such that

\noindent
(i)  for each $x \in L$, $M$ accepts $x$ with probability at least $1 - \epsilon$ and 

\noindent
(ii) for each $x \notin L$, $M$ accepts (rejects) $x$ with probability at most $\epsilon$ (at least $1 - \epsilon$).

\subsection{QFA Algorithm for $\modp$}
\label{sec:QFAforMODp}

In \cite{af98,an2008,an2009}, it is proven that there exists a QFA with $O(\log p)$ states recognizing the language $\modp$ with bounded error. Let $\Sigma=\{a\}$, the automaton is unary one.
Before discussing the automaton, we start with a simpler construction. For a fixed  $k \in \{1,\ldots,p-1\}$, let $M_k$ be a QFA with 2 states, namely $\ket{q_1}$ and $\ket{q_2}$. The QFA $M_k$ starts its computation in $\ket{q_1}$ and applies the identity matrix when reading $\cent$ or $\dollar$. When reading symbol $a$, $M_k$ applies a rotation on the (real) plane $\ket{q_1}$-$\ket{q_2}$ with angle $k\frac{2\pi}{p}$. Here, $q_1$ is the accepting state. The QFA $M_k$ can be implemented by using a single qubit by associating $\ket{q_1}$ and $\ket{q_2}$ with states $\ket{0}$ and $\ket{1}$, respectively. Then, the rotation matrix (parameterized with $k$) is
\[
A_k = \mymatrix{cc}{\cos \mypar{\frac{2k\pi}{p}} & -\sin \mypar{\frac{2k\pi}{p}} \\ \sin \mypar{\frac{2k\pi}{p}} & \cos \mypar{\frac{2k\pi}{p}}}.
\]


Next, we describe a $2d$-state QFA $M_K$ for the $\modp$ language, where $K=\{k_1,\ldots,k_d\}$. We pick $d$ as a power of 2 for simplicity. The QFA $M_K$ executes the QFAs $M_{k_1}$, \dots, $M_{k_d}$ in parallel. After reading $\cent$, $M_K$ enters into equal superposition of $M_{k_j}$'s for $j \in \{1,\ldots,d\}$. Each $M_{k_j}$ uses two states, and when reading each $a$, $M_{k_j}$ applies the rotation matrix $A_{k_j}$. At the end of the computation, $M_K$ applies the inverse of the transition matrix used for the symbol $\cent$. It was shown in \cite{af98,AN09} that there exist $O(\log p)$ $k$ values such that $M_K$ recognizes $\modp$ with bounded error.

We can use Hadamard operators for the end-markers for implementation of $M_K$ on real hardware. Consequently, the main cost of the implementation comes from the transition matrix for the symbol $a$, which is defined as
\[
U_a = \mymatrix{c| c | c | c}{ A_{k_1} & 0 & \cdots & 0 \\ \hline 0 & A_{k_2} & \cdots & 0 \\ \hline \vdots & \vdots & \ddots & \vdots \\ \hline 0 & 0 & \cdots & A_{k_d} } .
\]
The operator $U_a$ is also known as uniformly controlled rotation \cite{mottonen2006decompositions}. Thus, we can use its efficient implementations from \cite{mottonen2006decompositions} for circuit implementation.

\subsection{Connection between QFA for $\modp$ and Hashing Algorithm}

In the QFA for $\modp$ language, transformation of the quantum register after reading all the input can be  represented as a computing of a  fingerprint or a hash of the input word. The hash represents a length of the input by modulo $p$. When one needs to compare two strings, it can be done  by comparing their hashes by modulo $p$ \cite{av2013}. In the quantum case, for the first string, the target qubit is rotated counterclockwise, for the second string, on the contrary, it is rotated clockwise. Then, if the measured result is in the state $|0\rangle$, one can assume that strings are equal with some probability. So, the quantum hashing algorithm is computationally based on the QFA algorithm for $\modp$ language, and further, by the QFA for  $\modp$ we mean a hashing algorithm.

\subsection{Pseudo Rotations Approach}
\label{sec:pseudo}
We focus on the number of CNOT gates when say about a cost of a circuit. The basic operation of the  QFA for  $\modp$ is $U_a$ which is composed of $d$ rotations, where $\log_2 d+1$ is a width of the circuit. It can be implemented by applying $d$ CNOT gates \cite{mottonen2006decompositions}.
For reducing the number of CNOT gates we use an approach presented in \cite{kalis18,ziiatdinov2023gaps}. Namely, the circuit uses $\log_2 d +1 $ rotations and apply them as shown in Fig.~\ref{fig:pseudo}. In this case, rotation angles are chosen as a linear combination of a set of angles of size $\log_2 d+1$, and so the set $K$ of size $d$ is not determined independently.  This short circuit cannot be an implementation for an arbitrary uniformly controlled rotation gate. However, numerical experiments given in \cite{ziiatdinov2023gaps}, show us that the circuit is reasonable for the quantum hashing algorthm. Authors of \cite{ziiatdinov2023gaps} presented angles that produces the required set $K$ which allows to recognize $\modp$ with bounded error at most $\frac{1}{3}$.

\begin{figure}[h]
\begin{center}
\begin{tikzpicture}
\node[scale=0.6] {
\begin{quantikz}[column sep=0.2cm, row sep=0.2cm]
            q_{1}&=\ket{0} & \gate{H} & \qw & \qw  &\qw & \ldots && \qw   & \ctrl{6}& \gate{H} &\meter{} & \qw \\
           q_{2}&=\ket{0} & \gate{H}& \qw  & \qw &\qw &\ldots && \ctrl{5}&\qw     &  \gate{H} &\meter{} & \qw \\
           \\
           & & \wave &&&&&&&&& \\
            & & \\
                q_{\log d}&= \ket{0} & \gate{H} & \qw & \ctrl{1}   &\qw & \ldots && \qw  &  \qw &\gate{H} &\meter{} & \qw \\
                q_{\log d+1}&=\ket{0} & \qw & \gate{T_{\xi_0}} &  \gate{T_{\xi_1}}  & \qw & \ldots  && \gate{T_{\xi_{\log d-1}}} &   \gate{T_{\xi_{\log d}}} &\qw & \meter{}& \qw
\end{quantikz}
};
\end{tikzpicture}
\end{center}
\caption{Pseudo rotations approach.}
\label{fig:pseudo}
\end{figure}
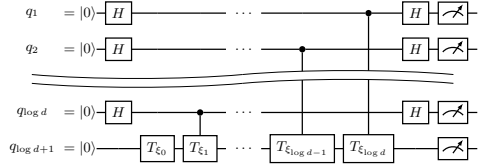


\section{Implementations on Real Quantum Devices}
\label{sec:real-hardware}

Quantum computers execute programs using a basic set of gates, and each backend works with specifically structured set of qubits. Before execution, any program is transpiled into a circuit corresponding to the basis gates and the topology of qubits. The framework  Qiskit provides the transpiler which uses optimization algorithms for reducing computational error \cite{transpiler,corcoles2021exploiting}. However, the transpillation process is generic and so may provide not the best solutions for the circuits decomposition.   

Let us note that a CNOT gate between two qubits that are not neighbor to each other would be implemented by using several CNOT gates. In this section, we show how we can reduce the CNOT-cost of the circuit based on the choice of the topology of the real hardware.  For the implementation, we use tools and libraries provided by Qiskit framework~\cite{qiskit_framework}. 
We consider such backends of IBM Quantum (IBMQ) \cite{ibmqbackends} as 16-qubit $\mathtt{ibmq}\_\mathtt{guadelupe}$ and 27-qubit $\mathtt{ibmq}\_\mathtt{cairo}$ backends. The 16-qubit device is an instance of Falcon r4p architecture and the 27-qubit device is an instance of Falcon 5.11 architecture.

Almost all IBMQ backends use the set of gates $\{CX, I, R_z, SX, X \}$ as basis gates. Here, $CX$ is the CNOT gate, $I$ is the identity gate, $X$ is the NOT gate, and $R_z$ and $SX$ are defined as:
\[
	R_z(\theta) = \mymatrix{cc}{e^{-i\theta/2} & 0 \\ 0 & e^{i \theta/2}}
~~
	SX =  \frac{1}{2} 
	\begin{pmatrix}
        1+i & ~ & 1-i \\
        1-i &~ & 1+i \\
    \end{pmatrix}.
\]
For the circuit implementation, we use lemmas proposed in \cite{sy2021} which claim the following. 
\[
	SX^{\dagger} R_z(2k\pi/p) SX = R_y(2k\pi/p),
\]
\[	(R_y(2k\pi/p))^j = (SX^{\dagger} R_z(2k\pi/p) SX )^j = SX^{\dagger} R^j_z(2k\pi/p) SX  
\]

\subsection{Optimization of Circuits with Respect to a Qubit Topology}

In this section, we give an optimization approach for implementing quantum hashing algorithm for a given quantum devices.

 At first, let us consider the  16-qubit $\mathtt{ibmq}\_\mathtt{guadelupe}$ backend which has  the topology of qubits  resembling a circle with branches (see Fig.\ref{fig:guadelupe}).

\begin{figure}[h!]
\centering
\includegraphics[width=5cm]{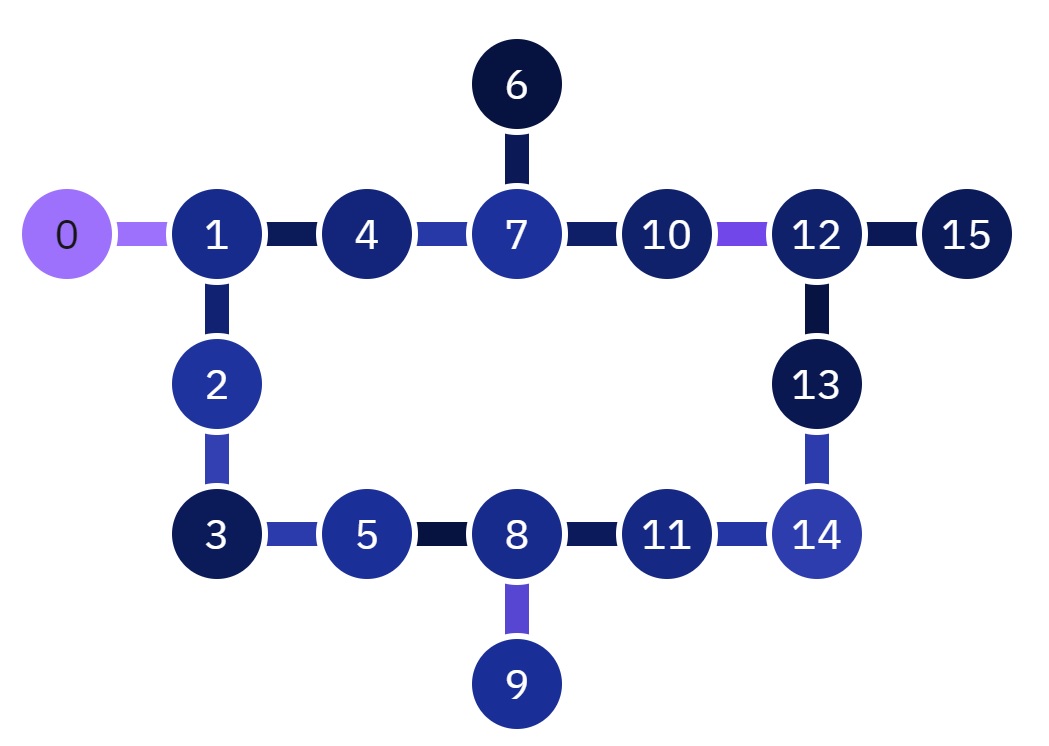}
\caption{16-qubit topology of  $\mathtt{ibmq}\_\mathtt{guadelupe}$ machine.}
\label{fig:guadelupe}
\end{figure}

We implement the hashing algorithm with pseudo rotations given in Section~\ref{sec:pseudo} using the gate equivalences given above.
For testing the approach, we pick different values of $p$ and values of the set 
$K$ according to the results given in \cite{ziiatdinov2023gaps}. We optimize a circuit and run it on the simulator of  $\mathtt{ibmq}\_\mathtt{guadelupe}$ machine using $\mathtt{FakeGuadalupeV2}$ backend.  Based on the topology, we can minimize the number of CNOT gates.


Let us look at the standard qiskit's transpiling procedure. Since controlled operations can be processed only on neighbor qubits in the real machines, a transpilator swaps qubits to make a controller qubit close to the target qubit. This also increases the number of CNOT gates in a circuit.
We restructure a circuit in the following way. For a current CNOT-gate, the target qubit is swapped with the neighboring one if the control qubit is not adjacent to it. For the topology in Fig.\ref{fig:guadelupe}, for each input symbol $x$ the target qubit is initially in a qubit $q_1$ (position 1 in a graph). For each controlled rotation gate, the target qubit goes by swapping to the 4-th, then to the 7-10-12-13-14-11-8-5 qubits. Finally, to come back to the initial position $q_1$, the target goes from the 5-th to $3$-$2$-$1^{st}$ position.

The circuit equality given in Fig. \ref{fig:crz} shows how to implement
 the controlled $R_z$ rotation ($CR_z$) using two CNOT gates and two $R_z$ gates. However, we cannot apply CNOT gate to any pair of qubits, but only to neighbor qubits. That is why we use SWAP gate that exchanges two qubits. It can be presented as three CNOT gates that is shown in the circuit equality in Fig. \ref{fig:swap}. At the same time, if we apply $CR_z$ rotation and then move the target qubit using SWAP gate, then two sequential CNOT gates cancel each other as it presented in Fig.  \ref{fig:rotation}. So, the optimized procedure of moving the target qubit followed by its' rotation costs  $3$ CNOT gates, when the simple controlled rotation of the target qubit takes $2$ CNOT gates. Note that the target qubit does not move to the positions 0,2,6,15, 9 and 3 of the topology graph (see Fig.\ref{fig:guadelupe}), so each interacting of the target with the qubits in these positions costs only  $2$ CNOT gates.

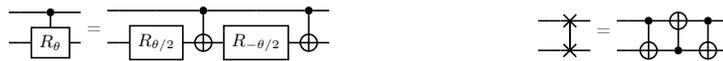
\begin{figure}[h!]
\begin{subfigure}{0.5\textwidth}
\centering
\begin{tikzpicture}
\node[scale=0.75] {
\begin{quantikz}[column sep=0.2cm, row sep=0.2cm]
     & \qw&\ctrl{1} & \qw 
    \\
    &\qw & \gate{R_{\theta}} & \qw          
\end{quantikz}=\begin{quantikz}[column sep=0.2cm, row sep=0.2cm]
     & \qw&\qw&\ctrl{1} &\qw &\ctrl{1} & \qw   
    \\
    &\qw & \gate{R_{\theta/2}}  &\targ{} &    \gate{R_{-\theta/2}}  &\targ{} & \qw     
\end{quantikz}
};
\end{tikzpicture}
\caption{Circuit equality for  a controlled rotation}.
\label{fig:crz}
\end{subfigure}
\begin{subfigure}{0.5\textwidth}
\centering
\begin{tikzpicture}
\node[scale=0.75] {
\begin{quantikz}[column sep=0.2cm, row sep=0.2cm]
     & \qw&\swap{1} & \qw 
    \\
    &\qw & \swap{-1} & \qw          
\end{quantikz}=\begin{quantikz}[column sep=0.2cm, row sep=0.2cm]
     & \qw&\ctrl{1} &\targ{}&\ctrl{1} & \qw   
    \\
    &\qw  &\targ{} & \ctrl{-1} &\targ{} & \qw     
\end{quantikz}
};
\end{tikzpicture}
\caption{Circuit equality for a SWAP gate}.
\label{fig:swap}
\end{subfigure}
\caption{Circuit equalities}.
\end{figure}

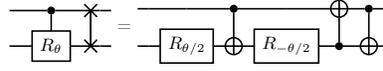
\begin{figure}[h!]
\begin{center}
\begin{tikzpicture}
\node[scale=0.75] {
\begin{quantikz}[column sep=0.2cm, row sep=0.2cm]
     & \qw&\ctrl{1} &\swap{1} & \qw 
    \\
    &\qw & \gate{R_{\theta}}  &\swap{-1} & \qw          
\end{quantikz}=\begin{quantikz}[column sep=0.2cm, row sep=0.2cm]
     & \qw&\qw&\ctrl{1} &\qw& \targ{} &\ctrl{1} & \qw   
    \\
    &\qw & \gate{R_{\theta/2}}  &\targ{} &    \gate{R_{-\theta/2}} & \ctrl{-1} &\targ{} & \qw     
\end{quantikz}
};
\end{tikzpicture}
\end{center}
\caption{Circuit equality for a SWAP gate followed by a controlled rotation gate}.
\label{fig:rotation}
\end{figure}


The part of the proposed circuit is shown as the following (see Fig.\ref{fig:pseudodec}).
We note that the number of CNOT gates does not change after using Qiskit transpiler with the maximal possible optimization level 3.



\begin{figure}[h!]
\begin{center}
\begin{tikzpicture}
\node[scale=0.55] {
\begin{quantikz}[column sep=0.2cm, row sep=0.2cm]
            q_{7}=\ket{0} & \gate{H} & \qw  &\qw & \qw  & \qw& \qw &\qw &\qw & \qw  & \qw& \qw& \qw & \qw  & \qw& \qw      & \ctrl{1}& \qw   & \targ{}& \ctrl{1}& \qw & \ldots\\  
            q_{4}=\ket{0} & \gate{H} & \qw  &\qw & \qw  & \qw& \qw & \qw & \qw  & \qw& \qw      & \ctrl{2}& \qw   & \targ{}& \ctrl{2}&\gate{R_z(\theta_4)} & \targ{} &\gate{R_z(\theta_4')}& \ctrl{-1}&\targ{}&\qw & \ldots\\   
           q_{2}=\ket{0} & \gate{H}& \qw  &\qw  & \qw &\qw  & \qw  & \ctrl{1}&\qw &  \ctrl{1}   & \qw & \qw &\qw & \qw &\qw & \qw  & \qw &\qw  &\qw &\qw  &\qw  & \ldots\\
           q_{1}=\ket{0} &\gate{S}&\gate{R_z(\theta_1)}&  \targ{} &\gate{R_z(\theta_1')} & \targ{} &\gate{R_z(\theta_2)} &\targ{} &\gate{R_z(\theta_2')}  & \targ{} &  \gate{R_z(\theta_3)} & \targ{} &\gate{R_z(\theta_3')} & \ctrl{-2} &\targ{} &\qw  & \qw &\qw  & \qw & \qw & \qw & \ldots\\
           q_{0}=\ket{0} &\gate{H} &\qw  &\ctrl{-1} &\qw &\ctrl{-1}  &  \qw  & \qw &\qw  & \qw & \qw  & \qw &\qw  & \qw &\qw  & \qw    & \qw& \qw & \qw & \qw & \qw & \ldots\\    
\end{quantikz}
};
\end{tikzpicture}
\end{center}
\caption{Proposed circuit for the input $a$ for $\mathtt{MOD_{p}}$ language}.
\label{fig:pseudodec}
\end{figure}
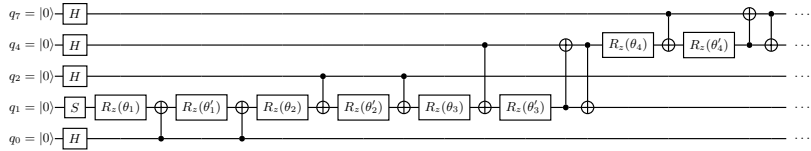

We implemented both our circuit and the pseudo circuit. For comparison, we also transpiled the original pseudo circuit with optimization level 3. The results are given in Fig.~\ref{fig:cxs} through which we can observe that we reduce the number of CNOT gates by more than three times. 

\begin{figure}[h!]

\centering
\includegraphics[width=8cm]{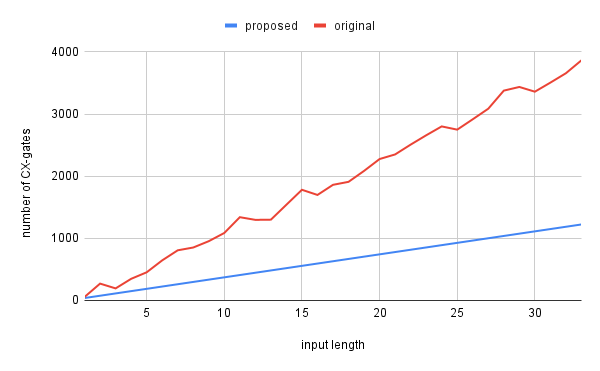}
\caption{The number of CNOT gates after transpillation for the pseudo circuit constructed using the original approach and our approach for $\modp$.}
\label{fig:cxs}
\end{figure}


For the 27-qubit graph corresponding to the topology of machines $\mathtt{ibmq}\_\mathtt{auckland}$ or $\mathtt{ibmq}\_\mathtt{cairo}$, the target qubit goes through 1-4-7-10-12-15-18-21-23-24-25-22-19-16-14-11-8-5 qubits if we can recognize parity of an index of an input symbol. The reverted chain is used for even symbols. If we cannot recognize parity of an index, and the circuit should be unified for all symbols, then the chain is  1-4-7-10-12-15-18-21-23-24-25-22-19-16-14-11-8-5-3-2. An image of such topology is shown in Fig.\ref{fig:auck}. 

\begin{figure}[h!]

\centering
\includegraphics[width=10cm]{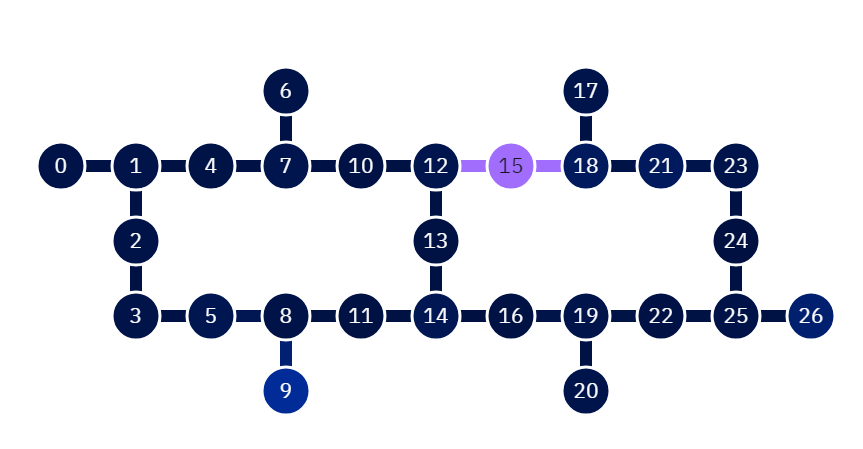}
\caption{Qubits' topology of  $\mathtt{ibmq}\_\mathtt{auckland}$, $\mathtt{ibmq}\_\mathtt{cairo}$ machines.}
\label{fig:auck}
\end{figure}

In the rest part of the paper we consider only the case when we can recognize parity of an index. Another case can be considered in the similar way.

We calculate the number of CNOT gates of our program executed on a device with such kind of topologies. For 16-qubit topology (Fig.\ref{fig:guadelupe}), there are $6$ rotations without moving a target qubit (for qubits number 0,2,6,15,9, and 3) and 9 rotations with moving the target qubit (by chain 1-4-7-10-12-13-14-11-8-5). The total number of CNOTs is $39$, whereas the  transpiled  original circuit uses $51$ CNOT gates.
For 27-qubit topology, the transpiled optimized circuit contains $69$ CNOT gates while the transpiled  original circuit has $109$ CNOT gates.

When reading more than one input symbol, we can also decrease the number of CNOT gates by joining some control rotations for the $i$-th and $(i+1)$-th symbols which is presented in \cite{ksy2024} in details. We decrease it for $2$ CNOT gates for each symbol because we have two sequential $CR_z$ gates that can be replaced by one $CR_z$ gate rotating the qubit by the sum of angles.   
Thus, the experiments  showed that the difference between costs of the proposed and original circuits increases with the length of the input. It shows that our optimization approach gives a better CNOT-cost compared to the results reached by the Qiskit transpiler. 

So, we have the following two results on CNOT-cost for a string of length $l$.

\begin{theorem}
The presented quantum circuit implements QFA for $\modp$ on an input string of length $l$. 
\begin{itemize}
    \item For 16-qubit $\mathtt{ibmq}\_\mathtt{guadelupe}$, it has $37\cdot l +4$ CNOT gates if $l\geq 3$, $39$ if $l=1$ and $76$ if $l=2$.  
    \item For 27-qubit $\mathtt{ibmq}\_\mathtt{auckland}$ or $\mathtt{ibmq}\_\mathtt{cairo}$, it has $67\cdot l +4$ CNOT gates if $l\geq 3$, $69$ if $l=1$ and $136$ if $l=2$.  
\end{itemize} 
\end{theorem}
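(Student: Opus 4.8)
The plan is to get the cost in two stages: first pin down the $CX$-count of the block implementing one input symbol $a$, then compose $l$ such blocks while reusing the two gate-level optimizations already set up above, namely the SWAP-then-rotation cancellation of Fig.~\ref{fig:rotation} and the fusion of two consecutive $CR_z$ gates on the same qubit pair into one. Correctness (that the resulting circuit implements the QFA for $\modp$ on length-$l$ inputs with bounded error) is inherited from the pseudo-rotation construction of \cite{ziiatdinov2023gaps} together with the identities $SX^{\dagger}R_z(2k\pi/p)SX=R_y(2k\pi/p)$ and $(R_y(2k\pi/p))^{j}=SX^{\dagger}R_z^{\,j}(2k\pi/p)SX$ of \cite{sy2021}; in particular the two outer $SX$ gates on the target are shared across all $l$ blocks and contribute no $CX$, so the $CX$-count is entirely that of the $l$ uniformly-controlled-$R_z$ parts.

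For one symbol I would fix the layout with the target at position $1$ and route it along the chain stated just before the theorem. In the pseudo-rotation circuit of Fig.~\ref{fig:pseudo} there is one uncontrolled rotation on the target (costing no $CX$) and one controlled rotation per control qubit. For $\mathtt{ibmq}\_\mathtt{guadelupe}$, the control qubits at positions $0,2,6,15,9,3$ are adjacent to the target with no movement, so each is a bare $CR_z$ at $2$ $CX$ (Fig.~\ref{fig:crz}), while the $9$ control qubits reached along the chain 1-4-7-10-12-13-14-11-8-5 each cost a move-then-rotate block at $3$ $CX$ (Fig.~\ref{fig:rotation}); total $6\cdot 2+9\cdot 3=39$. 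The identical bookkeeping on the 27-qubit chain 1-4-7-10-12-15-18-21-23-24-25-22-19-16-14-11-8-5 (a path on $18$ qubits, hence $17$ moving rotations) with $9$ stationary rotations at the remaining control qubits gives $9\cdot 2+17\cdot 3=69$. This settles the cases $l=1$.

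Next I would compose the blocks. Since the target ends a block at the far end of its routing chain (position $5$), the block for the next symbol is executed with the chain traversed in the opposite direction, so there is no re-routing overhead; and because consecutive blocks meet with the target pinned to the same physical qubit, the last $CR_z$ of one block and the first $CR_z$ of the next act on the same pair and fuse into a single $CR_z$ with the summed angle (the optimization of \cite{ksy2024}), saving $2$ $CX$ there. For $l=2$ the unique junction fuses, giving $2\cdot 39-2=76$ and $2\cdot 69-2=136$. For $l\ge 3$ one tracks the $l-1$ junctions together with the boundary effects at the first and last blocks; the net effect is a saving of $2(l-2)$ $CX$ relative to $l$ independent blocks, yielding $39l-2(l-2)=37l+4$ and $69l-2(l-2)=67l+4$. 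The general formula then follows by a direct induction on $l$, and I would close by noting, as the surrounding text does, that Qiskit transpilation at optimization level $3$ does not lower this count.

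The step I expect to be the main obstacle is exactly the endpoint bookkeeping that produces the additive constant $4$: verifying precisely which junctions admit the $CR_z$-fusion once the alternating traversal direction is taken into account (and why the count for $l=2$ differs from the $l\ge 3$ pattern), checking that reversing the traversal never forces the target onto a qubit where an intended control is non-adjacent, and confirming that after all fusions the ordered product of blocks still realizes $U_\dollar (U_a)^{l} U_\cent$ up to the shared conjugation by $SX$ on the target. Everything else — the single-symbol $CX$-counts and the per-junction saving — is a routine gate count.
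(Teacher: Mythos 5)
Your proposal is correct and follows essentially the same route as the paper's proof: count per symbol the stationary controlled rotations at $2$ CNOTs each ($6$ for the 16-qubit device, $9$ for the 27-qubit one) and the move-then-rotate blocks at $3$ CNOTs each ($9$ resp. $17$), giving $39$ resp. $69$ per symbol, then subtract the $CR_z$-fusion savings of $2$ CNOTs for $l=2$ and $2(l-2)$ for $l\ge 3$. The endpoint bookkeeping you flag as the delicate point (why the saving is $2(l-2)$ rather than one saving per junction) is stated in the paper's proof with exactly the same brevity and no further justification, so your account matches it.
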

\begin{proof}
    Let us consider 16-qubit machine. For each symbol we process $6$ control rotations without swaps (for qubits 0,2,6,15,9, and 3), each of them has $2$ CNOT gates; and $9$ control rotations with swaps (for qubits 4, 7, 10, 12, 13, 14, 11, 8, and 5), each of them has $3$ CNOT gates. Totally, we have $39$ CNOT gates for each symbol. At the same time, for each two sequential symbols we remove $2$ CNOT gates. We do not do it for $l=1$. We remove $2$ CNOT gates if $l=2$ and $2(l-2)$ CNOT gates in other cases.

    The proof for 27-qubit machine is the similar, but in that case we have  $9$ control rotations without swaps (for qubits 0, 2, 6, 13, 17, 26, 20, 9, and 3), each of them has $2$ CNOT gates; and $17$ control rotations with swaps (for qubits 4, 7, 10, 12, 15, 18, 21, 23, 24, 25, 22, 19, 16, 14, 11, 8, and 5), each of them has $3$ CNOT gates.
\end{proof}

\subsection{Other IBMQ Topologies}

To date,  different IBMQ backends  such as $\mathtt{ibmq}\_\mathtt{auckland}$, 
$\mathtt{ibmq}\_\mathtt{cairo}$, $\mathtt{ibmq}\_\mathtt{osaka}$, $\mathtt{ibmq}\_\mathtt{kioto}$, $\mathtt{ibmq}\_\mathtt{cleveland}$ etc. have the similar architectures of qubits $\it{Eagle}, \it{Falcon}, \it{Heron}$. So, hashing algorithm can be implemented on such machines in the similar way which we presented above. Namely, the target qubit moves through the following ``snake'' trace as it shown by a red line in a Fig.\ref{fig:eagle}.

\begin{figure}[h!]

\centering
\includegraphics[width=7cm]{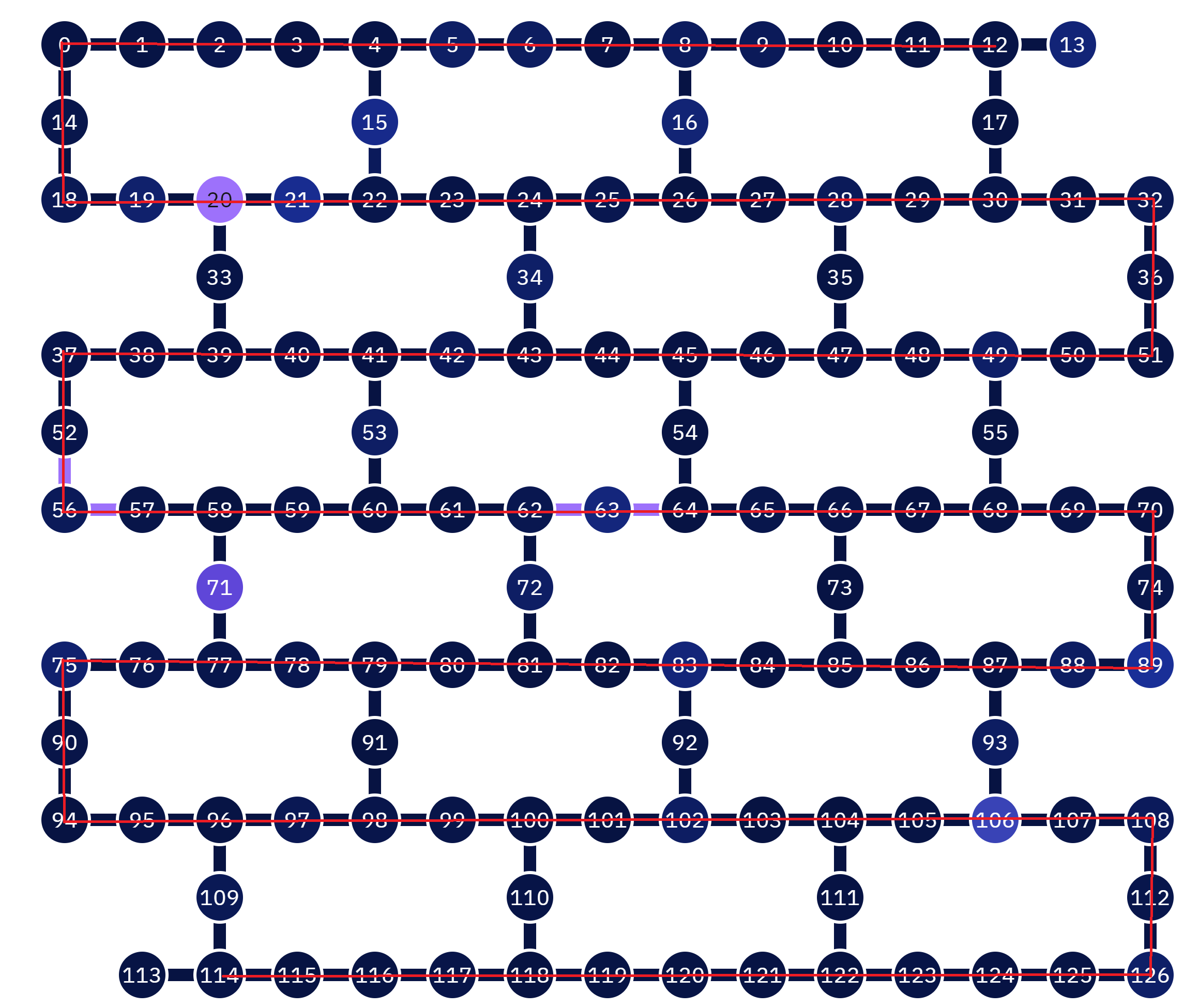}
\caption{A qubit topology of a $\mathtt{ibmq}\_\mathtt{torino}$ machine with $\it{Heron}$ architecture}
\label{fig:eagle}
\end{figure}

 \section{QFT Algorithm Implementation on Real Devices}
 \label{sec:qft}
There are also other algorithms with the similar structure of the circuit. For instance, we consider Quantum Fourier Transform algorithm.
This algorithm plays important role  and can be considered as a base of many algorithms like quantum addition \cite{d2000}, quantum phase estimation (QPE) \cite{k1995}, quantum amplitude estimation (QAE) \cite{bhmt2002}, the algorithm for solving linear systems of equations \cite{hhl2009}, quantum walks \cite{mnrs2007}, and Shor’s factoring algorithm \cite{s97}, to name a few. Therefore, the cost optimization of QFT would result in the efficiency improvement of these quantum algorithms.

In our work, we skip details of the QFT algorithm and only focus on its implementation circuit (see Fig. \ref{fig:qft}), where $R_k$ is a  controlled phase gate represented by a matrix 
$R_k=\begin{pmatrix}
 1 & 0\\
 0 & e^{i 2 \pi/2^k}
\end{pmatrix}$.
You can find  more details about the algorithm in  \cite{nc2010}.
In \cite{park2023reducing}, authors presented the circuit equality for $CR_k$ decomposition using CNOT and $R_z$ rotation gates. Add the SWAP gate we obtain the following circuit.
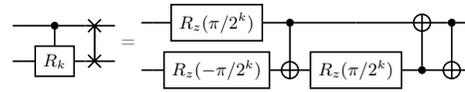
\begin{figure}[h!]
\begin{center}
\begin{tikzpicture}
\node[scale=0.75] {
\begin{quantikz}[column sep=0.2cm, row sep=0.2cm]
     & \qw&\ctrl{1} &\swap{1} & \qw 
    \\
    &\qw & \gate{R_{k}}  &\swap{-1} & \qw          
\end{quantikz}=\begin{quantikz}[column sep=0.2cm, row sep=0.2cm]
      &\qw& \gate{R_z(\pi/2^k)}&\ctrl{1} &\qw& \targ{} &\ctrl{1} & \qw   
    \\
    &\qw & \gate{R_z(-\pi/2^k)}  &\targ{} &    \gate{R_z(\pi/2^k)} & \ctrl{-1} &\targ{} & \qw     
\end{quantikz}
};
\end{tikzpicture}
\end{center}
\caption{Circuit equality for a SWAP gate followed by a controlled phase gate}.
\label{fig:rotation}
\end{figure}

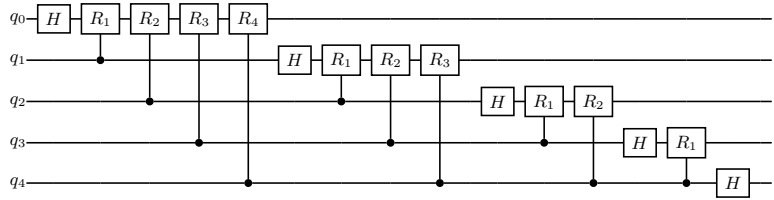
\begin{figure}[h!]
\begin{center}
\begin{tikzpicture}
\node[scale=0.75] {
\begin{quantikz}[column sep=0.2cm, row sep=0.2cm]
            q_{0}& \gate{H} & \gate{R_1}  &\gate{R_2}& \gate{R_3} & \gate{R_4}& \qw &\qw &\qw & \qw  & \qw& \qw& \qw & \qw  & \qw& \qw &\qw & \qw & \qw      \\
            q_{1} & \qw  &\ctrl{-1} & \qw  & \qw& \qw & \gate{H} & \gate{R_1}  &\gate{R_2}& \gate{R_3} & \qw& \qw &\qw & \qw &\qw & \qw &\qw & \qw & \qw  \\    
           q_{2} & \qw& \qw  &\ctrl{-2} & \qw &\qw  & \qw  & \ctrl{-1}&\qw &  \qw   & \qw&\gate{H} & \gate{R_1}  &\gate{R_2}&  \qw &\qw & \qw &\qw & \qw    \\
           q_{3} & \qw& \qw  &\qw&\ctrl{-3} & \qw &\qw & \qw & \ctrl{-2} & \qw & \qw &\qw &\ctrl{-1}  & \qw & \gate{H} & \gate{R_1} & \qw & \qw & \qw & \\
           q_{4}  & \qw& \qw  &\qw&\qw &\ctrl{-4} &\qw&  \qw  & \qw  & \ctrl{-3} & \qw  & \qw & \qw &\ctrl{-2}  & \qw &\ctrl{-1}  & \gate{H}    & \qw& \qw \\   
           
\end{quantikz}
};
\end{tikzpicture}
\end{center}
\caption{ Conventional 5-qubit quantum Fourier transform circuit}.
\label{fig:qft}
\end{figure}

As one can note, it contains a cascades of controlled phase gates similar to the cascade of controlled rotation gates in the circuit for hashing in Fig.\ref{fig:pseudo}, where a target qubit is also fixed for each such cascade. 

So, we can claim that our circuit optimization approach works also for the QFT algorithm implementation. 

Namely, if the qubits are linearly structured, then the target qubit moves to the controller by swapping with a neighbor qubit for processing each controlled rotation.

We apply this approach of the circuit construction for QFT implementing on 16-qubit $\mathtt{ibmq}\_\mathtt{guadelupe}$ devise (see Fig.\ref{fig:guadelupe}) and 27-qubit $\mathtt{ibmq}\_\mathtt{cairo}$ devise (see Fig.\ref{fig:auck}). 

In Fig.\ref{fig:qft}, \ref{fig:qft2}, we present a 7-qubit circuit, where initially working qubits are set in positions 0-1-2-4-7-6-10 of the graph in Fig.\ref{fig:guadelupe}. Initially, the target qubit is set in the position 1; and after a cascade of controlled rotations, this qubit moves to the farthest position from the position 1 due to it is no longer ``useful'' in the circuit. 

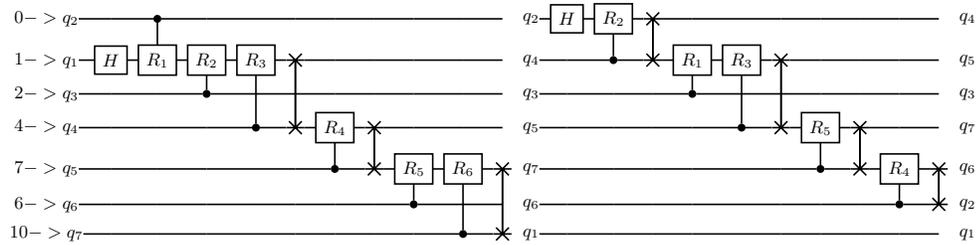
\begin{figure}[h!]
\begin{center}
\begin{tikzpicture}
\node[scale=0.75] {
\begin{quantikz}[column sep=0.2cm, row sep=0.2cm]
    0 -> q_{2} & \qw  & \ctrl{1} &\qw & \qw & \qw& \qw & \qw & \qw & \qw& \qw  & q_2  & \gate{H} & \gate{R_2}  &\swap{1}&\qw & \qw & \qw& \qw & \qw & \qw  &\qw & q_4 
    \\
    1 -> q_{1} &\gate{H} & \gate{R_1}  &\gate{R_2}& \gate{R_3} & \swap{2}&\qw &\qw &\qw & \qw  & \qw& q_4 & \qw & \ctrl{-1}  &\swap{-1}&\gate{R_1}& \gate{R_3} & \swap{2}& \qw &\qw &\qw & \qw  &  q_5
    \\
    2-> q_{3} & \qw& \qw  &\ctrl{-1} & \qw &\qw  & \qw  & \qw &  \qw & \qw & \qw  & q_3 &  \qw &\qw  &\qw&\ctrl{-1}& \qw &\qw  & \qw  & \qw &  \qw & \qw & q_3
    \\
    4->q_{4} & \qw& \qw  &\qw&\ctrl{-2} & \swap{-2} & \gate{R_4}&\swap{1}&\qw & \qw   & \qw & q_5 & \qw & \qw &\qw & \qw & \ctrl{-2} & \swap{-2} & \gate{R_5}&\swap{1}&\qw & \qw   &  q_7
    \\
    7-> q_{5}  & \qw& \qw  &\qw&\qw  &\qw&  \ctrl{-1}  & \swap{-1}&\gate{R_5} &\gate{R_6}& \swap{2} & q_7& \qw  &\qw&\qw  &\qw& \qw&  \qw&  \ctrl{-1}  & \swap{-1}&\gate{R_4} &\swap{1}& q_6
    \\   
    6->  q_{6}  & \qw & \qw  &\qw&\qw& \qw& \qw  &\qw&\ctrl{-1} &\qw&  \qw   & q_6 & \qw & \qw& \qw  &\qw&\qw& \qw& \qw  &\qw&\ctrl{-1} &  \swap{-1}  & q_2 
    \\ 
    10->  q_{7}  & \qw& \qw& \qw& \qw  &\qw&\qw  &\qw&\qw &\ctrl{-2}& \swap{-2}  & q_1& \qw  &\qw&\qw& \qw& \qw  &\qw&\qw&\qw&\qw&  \qw &q_1\\ 
           
\end{quantikz}
};
\end{tikzpicture}
\end{center}
\caption{ 7-qubit quantum Fourier transform circuit for implementing on $\mathtt{ibmq}\_\mathtt{guadelupe}$ machine. Part 1}.
\label{fig:qft}
\end{figure}

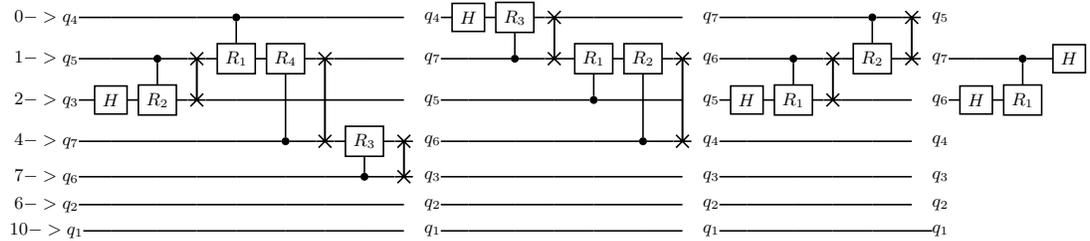
\begin{figure}[h!]
\begin{center}
\begin{tikzpicture}
\node[scale=0.75] {
\begin{quantikz}[column sep=0.2cm, row sep=0.2cm]
    0 -> q_{4} & \qw &\qw &\qw& \ctrl{1} &\qw & \qw & \qw& \qw &  q_4&\gate{H}  &\gate{R_3}&\swap{1} & \qw & \qw& \qw & q_7 &\qw & \qw& \qw &\ctrl{1}&\swap{1}& q_5 
    \\
    1 -> q_{5} & \qw & \ctrl{1} &\swap{1}&\gate{R_1}& \gate{R_4} & \swap{2}& \qw &\qw &  q_7   &\qw &\ctrl{-1}&\swap{-1}& \gate{R_1}  &\gate{R_2}& \swap{2} &  q_6 &\qw &\ctrl{1} &\swap{1}& \gate{R_2}&\swap{-1}& q_7 & \qw &\ctrl{1} &\gate{H}
    \\
    2-> q_{3} &\gate{H} & \gate{R_2}&\swap{-1} &\qw &\qw & \qw &\qw  & \qw  &q_5  & \qw  & \qw & \qw&\ctrl{-1}& \qw &\qw & q_5 &
    \gate{H} & \gate{R_1} &\swap{-1} &\qw &\qw &q_6 & \gate{H} & \gate{R_1}
    \\
    4->q_{7}& \qw& \qw  &\qw &\qw&\ctrl{-2} & \swap{-2} & \gate{R_3}&\swap{1}  & q_6 &\qw & \qw &\qw & \qw & \ctrl{-2} & \swap{-2} &  q_4 &\qw &\qw & \qw &\qw  & \qw &q_4
    \\
    7-> q_{6} & \qw  &\qw&\qw  &\qw& \qw& \qw&\ctrl{-1}  & \swap{-1}&q_3&\qw& \qw  &\qw&\qw  &\qw&  \qw& q_3 &\qw &\qw & \qw &\qw  & \qw &q_3
    \\   
    6->  q_{2}   & \qw  &\qw&\qw& \qw& \qw  &\qw&\qw&\qw &q_2&  \qw & \qw & \qw  &\qw&\qw& \qw& q_2  &\qw &\qw & \qw &\qw  & \qw &q_2 
    \\ 
    10->  q_{1}  & \qw& \qw& \qw  &\qw&\qw  &\qw&\qw &\qw& q_1& \qw & \qw  &\qw&\qw& \qw& \qw  &q_1&\qw &\qw & \qw &\qw  & \qw& \qw q_1\\ 
           
\end{quantikz}
};
\end{tikzpicture}
\end{center}
\caption{ 7-qubit quantum Fourier transform circuit for implementing on $\mathtt{ibmq}\_\mathtt{guadelupe}$ machine. Part 2}.
\label{fig:qft2}
\end{figure}

Due to the circuit equality given in Fig.\ref{fig:rotation}, each such gates structure applied to a neighboring pair of qubits costs $3$ CNOT gates.


So, we can claim that the CNOT-cost of the QFT algorithm circuit for implementing on $\mathtt{ibmq}\_\mathtt{guadelupe}$ machine is $O(\frac{3}{2} n^2)$, where $n$ is the number of working qubits in the circuit.

Generalizing the QFT algorithm for implementing on 16 qubits, we order qubits in the following way. The following Table.~\ref{tab:qft16} represents where qubits move after each cascade of controlled rotations. We skip the detailed circuit and present only positions of qubits in a graph Fig.~\ref{fig:guadelupe} for each cascade of controlled rotations. 

\begin{table}[!h] 
\centering
\caption{Repositioning of qubits $q_1$-$q_{16}$ while implementing a cirquit for QFT on 16-qubit Falcon r4p architecture }
\begin{tabular}{|r|l|l|l|l|l|l|l|l|l|l|l|l|l|l|l|l|}
\hline
Rotation cascade & \multicolumn{16}{c|}{Positions of qubits in the graph}\\
\hline
 number   & 0 & 1&2& 3 & 4& 5 & 6&7& 8 & 9& 10 & 11&12& 13 & 14& 15\\
\hline
1 & $q_2$ & $q_1$ & $q_3$ & $q_{13}$ & $q_4$ & $q_{12}$ & $q_{16}$ & $q_5 $ & $q_{11}$ & $q_{14}$ & $q_6$ & $q_{10}$ & $q_7$ & $q_8$ & $q_9$ & $q_{15}$\\

2 & $q_2$ & \color{red}{$q_4$} & $q_3$ & \color{red}{$q_{1}$} & \color{red}{$q_5$} & \color{red}{$q_{13}$} & $q_{16}$ & \color{red}{$q_6 $} & \color{red}{$q_{12}$} & $q_{14}$ & \color{red}{$q_7$} & \color{red}{$q_{11}$} & \color{red}{$q_8$ }& \color{red}{$q_9$} & \color{red}{$q_{10}$} & $q_{15}$\\

3 & \color{red}{$q_4$} & \color{red}{$q_5$} & $q_3$ & $q_{1}$ & \color{red}{$q_6$} & \color{red}{$q_{2}$} & $q_{16}$ & \color{red}{$q_7 $} & \color{red}{$q_{13}$} & $q_{14}$ & \color{red}{$q_8$} & \color{red}{$q_{12}$} & \color{red}{$q_9$} & \color{red}{$q_{10}$} & \color{red}{$q_{11}$} & $q_{15}$\\

4 & $q_4$ & \color{red}{$q_6$} & \color{red}{$q_5$} & $q_{1}$ & \color{red}{$q_7$} & $q_{2}$ & $q_{16}$ & \color{red}{$q_8 $} & \color{red}{$q_{14}$} & \color{red}{$q_{3}$} & \color{red}{$q_{9}$} & \color{red}{$q_{13}$ }& \color{red}{$q_{10}$} & \color{red}{$q_{11}$} & \color{red}{$q_{12}$} & $q_{15}$\\

5 & \color{red}{$q_6$} & \color{red}{$q_7$} & $q_5$ & $q_{1}$ & \color{red}{$q_8$} & $q_{2}$ & $q_{16}$ & \color{red}{$q_9 $} & \color{red}{$q_{4}$} & $q_{3}$ & \color{red}{$q_{10}$} & \color{red}{$q_{14}$} & \color{red}{$q_{11}$} & \color{red}{$q_{12}$} &\color{red}{ $q_{13}$} & $q_{15}$\\

6 & $q_6$ & \color{red}{$q_8$} & \color{red}{$q_7$} & $q_{1}$ & \color{red}{$q_9$} & $q_{2}$ & $q_{16}$ & \color{red}{$q_{10}$} & $q_{4}$ & $q_{3}$ & \color{red}{$q_{11}$} & \color{red}{$q_{5}$} & \color{red}{$q_{12}$ }& \color{red}{$q_{13}$} &\color{red}{ $q_{14}$} & $q_{15}$\\

7 & \color{red}{$q_8$} &\color{red}{ $q_9$} & $q_7$ & $q_{1}$ & \color{red}{$q_{10}$} & $q_{2}$ & $q_{16}$ & \color{red}{$q_{11}$} & $q_{4}$ & $q_{3}$ &\color{red}{ $q_{12}$} & $q_{5}$ & \color{red}{$q_{13}$} &\color{red}{ $q_{14}$} & \color{red}{$q_{6}$} & $q_{15}$\\

8 & $q_8$ &\color{red}{ $q_{10}$} &\color{red}{ $q_9$} & $q_{1}$ &\color{red}{ $q_{11}$} & $q_{2}$ & $q_{16}$ &\color{red}{ $q_{12}$} & $q_{4}$ & $q_{3}$ & \color{red}{$q_{13}$} & $q_{5}$ & \color{red}{$q_{14}$} & \color{red}{$q_{7}$} & $q_{6}$ & $q_{15}$\\

9& \color{red}{$q_{10}$ } &\color{red}{ $q_{11}$} & $q_9$ & $q_{1}$ &\color{red}{ $q_{12}$ }& $q_{2}$ & $q_{16}$ &\color{red}{ $q_{13}$} & $q_{4}$ & $q_{3}$ & \color{red}{$q_{14}$ }& $q_{5}$ & \color{red}{$q_{15}$} & $q_{7}$ & $q_{6}$ & \color{red}{$q_{8}$}\\

10 & $q_{10}$ &\color{red}{ $q_{12}$} & \color{red}{$q_{11}$ }& $q_{1}$ &\color{red}{ $q_{13}$ }& $q_{2}$ & $q_{16}$ & \color{red}{$q_{14}$ }& $q_{4}$ & $q_{3}$ & \color{red}{$q_{15}$} & $q_{5}$ & \color{red}{$q_{9}$} & $q_{7}$ & $q_{6}$ & $q_{8}$\\

11 &\color{red}{ $q_{12}$ }& \color{red}{$q_{13}$ }& $q_{11}$ & $q_{1}$ &\color{red}{ $q_{14}$} & $q_{2}$ & $q_{16}$ &\color{red}{ $q_{15}$} & $q_{4}$ & $q_{3}$ & \color{red}{$q_{10}$} & $q_{5}$ & $q_{9}$ & $q_{7}$ & $q_{6}$ & $q_{8}$\\

12 & $q_{12}$ & \color{red}{$q_{14}$ }& \color{red}{$q_{13}$} & $q_{1}$ &\color{red}{ $q_{15}$ }& $q_{2}$ &\color{red}{ $q_{11}$} & \color{red}{$q_{16}$} & $q_{4}$ & $q_{3}$ & $q_{10}$ & $q_{5}$ & $q_{9}$ & $q_{7}$ & $q_{6}$ & $q_{8}$\\

13 &\color{red}{ $q_{14}$} &\color{red}{ $q_{15}$} & $q_{13}$ & $q_{1}$ &\color{red}{ $q_{16}$} & $q_{2}$ & $q_{11}$ & \color{red}{$q_{12}$} & $q_{4}$ & $q_{3}$ & $q_{10}$ & $q_{5}$ & $q_{9}$ & $q_{7}$ & $q_{6}$ & $q_{8}$\\

14 &\color{red}{ $q_{13}$} & \color{red}{$q_{14}$ }& \color{red}{$q_{15}$} & $q_{1}$ & $q_{16}$ & $q_{2}$ & $q_{11}$ & $q_{12}$ & $q_{4}$ & $q_{3}$ & $q_{10}$ & $q_{5}$ & $q_{9}$ & $q_{7}$ & $q_{6}$ & $q_{8}$\\

15 & $q_{13}$ &\color{red}{ $q_{15}$} & \color{red}{$q_{14}$} & $q_{1}$ & $q_{16}$ & $q_{2}$ & $q_{11}$ & $q_{12}$ & $q_{4}$ & $q_{3}$ & $q_{10}$ & $q_{5}$ & $q_{9}$ & $q_{7}$ & $q_{6}$ & $q_{8}$\\
\hline
\end{tabular}
\label{tab:qft16}
\end{table}

Qubits that are moved while processing a rotation cascade are marked  by the red color.  For instance, while the first cascade is running, $11$ qubits are moved. Therefore, there are $10$ rotations with swaps and $5$ rotations without qubit moving are executed. Total CNOT-cost of the first cascade is $10 \cdot 3 + 5 \cdot 2=40$. The second cascade excludes interactions with the qubit $q_1$. Its CNOT-cost is $9 \cdot 3 + 5 \cdot 2=37$ In the same way we can easily compute the CNOT-cost of the whole circuit analysing each row in Table.~\ref{tab:qft16} and summarising CNOT-costs of each cascade implementing. This CNOT-cost is equal to $324$. 

 Moreover, we provide Table.~\ref{tab:qft27} that represents the qubits repositioning in a case of implementation on 27-qubit architecture, for instance, on the machine $\mathtt{ibmq}\_\mathtt{guadelupe}$.

\begin{table}[!h] 
\centering
\caption{Repositioning of qubits $q_1$-$q_{27}$ while implementing a cirquit for QFT on 27-qubit Falcon 5.11 architecture }
\begin{tabular}{|l|l|l|l|l|l|l|l|l|l|l|l|l|l|l|l|l|l|l|l|l|l|l|l|l|l|l|l|}
\hline
Cascade & \multicolumn{27}{c|}{Positions of qubits in the graph}\\
\hline
 number   & 0 & 1&2& 3 & 4& 5 & 6&7& 8 & 9& 10 & 11&12& 13 & 14& 15& 16 & 17&18& 19 & 20& 21 & 22&23& 24 & 25& 26\\
\hline
1 & \color{black}{$q_2$} & \color{black}{$q_1$} & \color{black}{$q_3$} & \color{black}{$q_{21}$} & \color{black}{$q_{4}$} & \color{black}{$q_{20}$} & \color{black}{$q_{27}$} & \color{black}{$q_{5}$} & $q_{19}$ & $q_{22}$ & $q_{6}$ & $q_{18}$ & $q_{7} $ & $q_{26}$ & $q_{17}$ & $q_{8}$ & $q_{16}$ & $q_{25}$ & $q_{9}$ & $q_{15}$ & $q_{23}$ & $q_{10}$ & $q_{14}$ & $q_{11}$ & $q_{12}$ & $q_{13}$ & $q_{24}$ \\

2 & \color{black}{$q_2$} 
& \color{red}{$q_4$} 
& \color{black}{$q_3$} 
&\color{red}{$q_{1}$} 
& \color{red}{$q_{5}$} 
& \color{red}{$q_{21}$} 
& $q_{27}$ 
& \color{red}{$q_{6}$} 
& \color{red}{$q_{20}$} 
& \color{red}{$q_{22}$} 
& \color{red}{$q_{7}$} 
& \color{red}{$q_{19}$} 
& \color{red}{$q_{8} $} 
& $q_{26}$ 
& \color{red}{$q_{18}$} 
& \color{red}{$q_{9}$} 
& \color{red}{$q_{17}$} 
& $q_{25}$
 &\color{red}{ $q_{10}$} 
 & \color{red}{$q_{16}$} 
 & $q_{23}$ 
 & \color{red}{$q_{11}$} 
 & \color{red}{$q_{15}$} 
 & \color{red}{$q_{12}$} 
 & \color{red}{$q_{13}$} 
 & \color{red}{$q_{14}$} & $q_{24}$ \\

3 & \color{red}{$q_4$} & \color{red}{$q_5$} & \color{black}{$q_3$} &\color{black}{$q_{1}$} & \color{red}{$q_{6}$} & \color{red}{$q_{2}$} & $q_{27}$ & \color{red}{$q_{7}$} & \color{red}{$q_{21}$} & $q_{22}$ & \color{red}{$q_{8}$} & \color{red}{$q_{20}$} & \color{red}{$q_{9} $} & $q_{26}$ & \color{red}{$q_{19}$} & \color{red}{$q_{10}$} & \color{red}{$q_{18}$} & $q_{25}$ & \color{red}{$q_{11}$} & \color{red}{$q_{17}$} & $q_{23}$ & \color{red}{$q_{12}$} & \color{red}{$q_{16}$} & \color{red}{$q_{13}$} & \color{red}{$q_{14}$} & \color{red}{$q_{15}$} & $q_{24}$ \\

4 & \color{black}{$q_4$} & \color{red}{$q_6$} & \color{red}{$q_5$} &\color{black}{$q_{1}$} & \color{red}{$q_{7}$} & $q_{2}$ & $q_{27}$ & \color{red}{$q_{8}$} & \color{red}{$q_{22}$} & \color{red}{$q_{3}$} & \color{red}{$q_{9}$} & \color{red}{$q_{21}$} & \color{red}{$q_{10} $} & $q_{26}$ & \color{red}{$q_{20}$} & \color{red}{$q_{11}$} & \color{red}{$q_{19}$} & $q_{25}$ & \color{red}{$q_{12}$ }& \color{red}{$q_{18}$} & $q_{23}$ & \color{red}{$q_{13}$ }& \color{red}{$q_{17}$} & \color{red}{$q_{14}$} & \color{red}{$q_{15}$} & \color{red}{$q_{16}$} & $q_{24}$ \\

5 & \color{red}{$q_6$} & \color{red}{$q_7$} & \color{black}{$q_5$} &\color{black}{$q_{1}$}& \color{red}{$q_{8}$} & $q_{2}$ & $q_{27}$ & \color{red}{$q_{9}$} & \color{red}{$q_{4}$} & $q_{3}$ & \color{red}{$q_{10}$} & \color{red}{$q_{22}$} & \color{red}{$q_{11} $} & $q_{26}$ & \color{red}{$q_{21}$} & \color{red}{$q_{12}$} & \color{red}{$q_{20}$} & $q_{25}$ & \color{red}{$q_{13}$} & \color{red}{$q_{19}$} & $q_{23}$ & \color{red}{$q_{14}$} & \color{red}{$q_{18}$} & \color{red}{$q_{15}$} & \color{red}{$q_{16}$} & \color{red}{$q_{17}$} & $q_{24}$ \\

6 & \color{black}{$q_6$} & \color{red}{$q_8$} & \color{red}{$q_7$} &\color{black}{$q_{1}$}& \color{red}{$q_{9}$} & $q_{2}$ & $q_{27}$ & \color{red}{$q_{10}$} & $q_{4}$ & $q_{3}$ & \color{red}{$q_{11}$} & \color{red}{$q_{5}$} & \color{red}{$q_{12} $} & $q_{26}$ & \color{red}{$q_{22}$} & \color{red}{$q_{13}$} & \color{red}{$q_{21}$} & $q_{25}$ & \color{red}{$q_{14}$} & \color{red}{$q_{20}$} & $q_{23}$ & \color{red}{$q_{15}$} & \color{red}{$q_{19}$} & \color{red}{$q_{16}$} & \color{red}{$q_{17}$} & \color{red}{$q_{18}$} & $q_{24}$ \\

7 & \color{red}{$q_8$} & \color{red}{$q_9$} & \color{black}{$q_7$} &\color{black}{$q_{1}$} & \color{red}{$q_{10}$} & $q_{2}$ & $q_{27}$ & \color{red}{$q_{11}$} & $q_{4}$ & $q_{3}$ & \color{red}{$q_{12}$} & $q_{5}$ & \color{red}{$q_{13} $} & $q_{26}$ & \color{red}{$q_{6}$} & \color{red}{$q_{14}$} & \color{red}{$q_{22}$} & $q_{25}$ & \color{red}{$q_{15}$} & \color{red}{$q_{21}$} & $q_{23}$ & \color{red}{$q_{16}$} & \color{red}{$q_{20}$ }& \color{red}{$q_{17}$} & \color{red}{$q_{18}$} & \color{red}{$q_{19}$} & $q_{24}$ \\

8 & \color{black}{$q_8$} & \color{red}{$q_{10}$} & \color{red}{$q_9$} &\color{black}{$q_{1}$} & \color{red}{$q_{11}$} & $q_{2}$ & $q_{27}$ & \color{red}{$q_{12}$} & $q_{4}$ & $q_{3}$ & \color{red}{$q_{13}$} & $q_{5}$ & \color{red}{$q_{14} $} & $q_{26}$ & $q_{6}$ & \color{red}{$q_{15}$} & \color{red}{$q_{7}$} & $q_{25}$ & \color{red}{$q_{16}$} & \color{red}{$q_{22}$} & $q_{23}$ & \color{red}{$q_{17}$} & \color{red}{$q_{21}$} & \color{red}{$q_{18}$} & \color{red}{$q_{19}$} & \color{red}{$q_{20}$ }& $q_{24}$ \\

9 & \color{red}{$q_{10}$} & \color{red}{$q_{11}$} & \color{black}{$q_{9}$} &\color{black}{$q_{1}$} & \color{red}{$q_{12}$} & $q_{2}$ & $q_{27}$ & \color{red}{$q_{13}$} & $q_{4}$ & $q_{3}$ & \color{red}{$q_{14}$} & $q_{5}$ & \color{red}{$q_{15} $} & $q_{26}$ & $q_{6}$ & \color{red}{$q_{16}$} & $q_{7}$ & $q_{25}$ & \color{red}{$q_{17}$} & \color{red}{$q_{23}$} & \color{red}{$q_{8}$} & \color{red}{$q_{18}$} & \color{red}{$q_{22}$} & \color{red}{$q_{19}$} & \color{red}{$q_{20}$} & \color{red}{$q_{21}$} & $q_{24}$ \\

10 & \color{black}{$q_{10}$} & \color{red}{$q_{12}$} & \color{red}{$q_{11}$} &\color{black}{$q_{1}$} & \color{red}{$q_{13}$} & $q_{2}$ & $q_{27}$ & \color{red}{$q_{14}$} & $q_{4}$ & $q_{3}$ & \color{red}{$q_{15}$} & $q_{5}$ & \color{red}{$q_{16} $} & $q_{26}$ & $q_{6}$ & \color{red}{$q_{17}$} & $q_{7}$ & $q_{25}$ & \color{red}{$q_{18}$} & \color{red}{$q_{9}$} & $q_{8}$ & \color{red}{$q_{19}$} & \color{red}{$q_{23}$} & \color{red}{$q_{20}$} & \color{red}{$q_{21}$} & \color{red}{$q_{22}$} & $q_{24}$ \\

11 & \color{red}{$q_{12}$} & \color{red}{$q_{13}$} & \color{black}{$q_{11}$} &\color{black}{$q_{1}$}& \color{red}{$q_{14}$} & $q_{2}$ & $q_{27}$ & \color{red}{$q_{15}$} & $q_{4}$ & $q_{3}$ & \color{red}{$q_{16}$} & $q_{5}$ & \color{red}{$q_{17} $} & $q_{26}$ & $q_{6}$ & \color{red}{$q_{18}$} & $q_{7}$ & $q_{25}$ & \color{red}{$q_{19}$} & $q_{9}$ & $q_{8}$ & \color{red}{$q_{20}$} & \color{red}{$q_{10}$} & \color{red}{$q_{21}$} & \color{red}{$q_{22}$} & \color{red}{$q_{23}$} & $q_{24}$ \\

12 & \color{black}{$q_{12}$} & \color{red}{$q_{14}$} & \color{red}{$q_{13}$} &\color{black}{$q_{1}$} & \color{red}{$q_{15}$ }& $q_{2}$ & $q_{27}$ & \color{red}{$q_{16}$} & $q_{4}$ & $q_{3}$ & \color{red}{$q_{17}$} & $q_{5}$ & \color{red}{$q_{18} $} & $q_{26}$ & $q_{6}$ & \color{red}{$q_{19}$} & $q_{7}$ & $q_{25}$ & \color{red}{$q_{20}$} & $q_{9}$ & $q_{8}$ & \color{red}{$q_{21}$} & $q_{10}$ & \color{red}{$q_{22}$} & \color{red}{$q_{23}$} & \color{red}{$q_{24}$} & \color{red}{$q_{11}$} \\

13 & \color{red}{$q_{14}$} & \color{red}{$q_{15}$} & \color{black}{$q_{13}$} &\color{black}{$q_{1}$} & \color{red}{$q_{16}$} & $q_{2}$ & $q_{27}$ & \color{red}{$q_{17}$} & $q_{4}$ & $q_{3}$ & \color{red}{$q_{18}$} & $q_{5}$ & \color{red}{$q_{19} $} & $q_{26}$ & $q_{6}$ & \color{red}{$q_{20}$} & $q_{7}$ & $q_{25}$ & \color{red}{$q_{21}$} & $q_{9}$ & $q_{8}$ & \color{red}{$q_{22}$} & $q_{10}$ & \color{red}{$q_{23}$} & \color{red}{$q_{24}$} & \color{red}{$q_{12}$} & $q_{11}$ \\

14 & \color{black}{$q_{14}$} & \color{red}{$q_{16}$} & \color{red}{$q_{15}$} &\color{black}{$q_{1}$} & \color{red}{$q_{17}$} & $q_{2}$ & $q_{27}$ & \color{red}{$q_{18}$} & $q_{4}$ & $q_{3}$ & \color{red}{$q_{19}$} & $q_{5}$ & \color{red}{$q_{20} $} & $q_{26}$ & $q_{6}$ & \color{red}{$q_{21}$} & $q_{7}$ & $q_{25}$ & \color{red}{$q_{22}$} & $q_{9}$ & $q_{8}$ & \color{red}{$q_{23}$} & $q_{10}$ & \color{red}{$q_{24}$} & \color{red}{$q_{13}$} & $q_{12}$ & $q_{11}$ \\

15 & \color{red}{$q_{16}$} & \color{red}{$q_{17}$} & \color{black}{$q_{15}$} &\color{black}{$q_{1}$}& \color{red}{$q_{18}$} & $q_{2}$ & $q_{27}$ & \color{red}{$q_{19}$} & $q_{4}$ & $q_{3}$ & \color{red}{$q_{20}$} & $q_{5}$ & \color{red}{$q_{21} $} & $q_{26}$ & $q_{6}$ & \color{red}{$q_{22}$} & $q_{7}$ & $q_{25}$ & \color{red}{$q_{23}$} & $q_{9}$ & $q_{8}$ & \color{red}{$q_{24}$} & $q_{10}$ & \color{red}{$q_{14}$} & $q_{13}$ & $q_{12}$ & $q_{11}$ \\

16 & \color{black}{$q_{16}$} & \color{red}{$q_{18}$} & \color{red}{$q_{17}$} &\color{black}{$q_{1}$} & \color{red}{$q_{19}$} & $q_{2}$ & $q_{27}$ & \color{red}{$q_{20}$} & $q_{4}$ & $q_{3}$ & \color{red}{$q_{21}$} & $q_{5}$ & \color{red}{$q_{22} $} & $q_{26}$ & $q_{6}$ & \color{red}{$q_{23}$} & $q_{7}$ & $q_{25}$ & \color{red}{$q_{24}$} & $q_{9}$ & $q_{8}$ & \color{red}{$q_{15}$} & $q_{10}$ & $q_{14}$ & $q_{13}$ & $q_{12}$ & $q_{11}$ \\

17 & \color{red}{$q_{18}$} & \color{red}{$q_{19}$} & \color{black}{$q_{17}$} &\color{black}{$q_{1}$} & \color{red}{$q_{20}$} & $q_{2}$ & $q_{27}$ & \color{red}{$q_{21}$} & $q_{4}$ & $q_{3}$ & \color{red}{$q_{22}$} & $q_{5}$ & \color{red}{$q_{23} $} & $q_{26}$ & $q_{6}$ & \color{red}{$q_{24}$} & $q_{7}$ & \color{red}{$q_{16}$} & \color{red}{$q_{25}$} & $q_{9}$ & $q_{8}$ & $q_{15}$ & $q_{10}$ & $q_{14}$ & $q_{13}$ & $q_{12}$ & $q_{11}$ \\

18 & \color{black}{$q_{18}$} & \color{red}{$q_{20}$} & \color{red}{$q_{19}$} &\color{black}{$q_{1}$} & \color{red}{$q_{21}$} & $q_{2}$ & $q_{27}$ & \color{red}{$q_{22}$} & $q_{4}$ & $q_{3}$ & \color{red}{$q_{23}$} & $q_{5}$ & \color{red}{$q_{24} $} & $q_{26}$ & $q_{6}$ & \color{red}{$q_{25}$} & $q_{7}$ & $q_{16}$ & \color{red}{$q_{17}$} & $q_{9}$ & $q_{8}$ & $q_{15}$ & $q_{10}$ & $q_{14}$ & $q_{13}$ & $q_{12}$ & $q_{11}$ \\

19 & \color{red}{$q_{20}$} & \color{red}{$q_{21}$} & \color{black}{$q_{19}$} &\color{black}{$q_{1}$}& \color{red}{$q_{22}$} & $q_{2}$ & $q_{27}$ & \color{red}{$q_{23}$} & $q_{4}$ & $q_{3}$ & \color{red}{$q_{24}$} & $q_{5}$ & \color{red}{$q_{25} $} & $q_{26}$ & $q_{6}$ & \color{red}{$q_{18}$} & $q_{7}$ & $q_{16}$ & $q_{17}$ & $q_{9}$ & $q_{8}$ & $q_{15}$ & $q_{10}$ & $q_{14}$ & $q_{13}$ & $q_{12}$ & $q_{11}$ \\

20 & \color{black}{$q_{20}$} & \color{red}{$q_{22}$} & \color{red}{$q_{21}$} &\color{black}{$q_{1}$}& \color{red}{$q_{23}$} & $q_{2}$ & $q_{27}$ & \color{red}{$q_{24}$} & $q_{4}$ & $q_{3}$ & \color{red}{$q_{25}$} & $q_{5}$ & \color{red}{$q_{26} $} & \color{red}{$q_{19}$} & $q_{6}$ & $q_{18}$ & $q_{7}$ & $q_{16}$ & $q_{17}$ & $q_{9}$ & $q_{8}$ & $q_{15}$ & $q_{10}$ & $q_{14}$ & $q_{13}$ & $q_{12}$ & $q_{11}$ \\

21 & \color{red}{$q_{22}$} & \color{red}{$q_{23}$} & \color{black}{$q_{21}$} &\color{black}{$q_{1}$} & \color{red}{$q_{24}$} & $q_{2}$ & $q_{27}$ & \color{red}{$q_{25}$} & $q_{4}$ & $q_{3}$ & \color{red}{$q_{26}$} & $q_{5}$ & \color{red}{$q_{20} $} & $q_{19}$ & $q_{6}$ & $q_{18}$ & $q_{7}$ & $q_{16}$ & $q_{17}$ & $q_{9}$ & $q_{8}$ & $q_{15}$ & $q_{10}$ & $q_{14}$ & $q_{13}$ & $q_{12}$ & $q_{11}$ \\

22 & \color{black}{$q_{22}$} & \color{red}{$q_{24}$} & \color{red}{$q_{23}$} &\color{black}{$q_{1}$}& \color{red}{$q_{25}$} & $q_{2}$ & $q_{27}$ & \color{red}{$q_{26}$} & $q_{4}$ & $q_{3}$ & \color{red}{$q_{21}$} & $q_{5}$ & $q_{20} $ & $q_{19}$ & $q_{6}$ & $q_{18}$ & $q_{7}$ & $q_{16}$ & $q_{17}$ & $q_{9}$ & $q_{8}$ & $q_{15}$ & $q_{10}$ & $q_{14}$ & $q_{13}$ & $q_{12}$ & $q_{11}$ \\

23 & \color{red}{$q_{24}$} & \color{red}{$q_{25}$} & \color{black}{$q_{23}$} &\color{black}{$q_{1}$}& \color{red}{$q_{26}$} & $q_{2}$ & \color{red}{$q_{22}$} & \color{red}{$q_{27}$} & $q_{4}$ & $q_{3}$ & $q_{21}$ & $q_{5}$ & $q_{20} $ & $q_{19}$ & $q_{6}$ & $q_{18}$ & $q_{7}$ & $q_{16}$ & $q_{17}$ & $q_{9}$ & $q_{8}$ & $q_{15}$ & $q_{10}$ & $q_{14}$ & $q_{13}$ & $q_{12}$ & $q_{11}$ \\

24 & \color{black}{$q_{24}$} & \color{red}{$q_{26}$} & \color{red}{$q_{25}$} &\color{black}{$q_{1}$} & \color{red}{$q_{27}$} & $q_{2}$ & $q_{22}$ & \color{red}{$q_{23}$} & $q_{4}$ & $q_{3}$ & $q_{21}$ & $q_{5}$ & $q_{20} $ & $q_{19}$ & $q_{6}$ & $q_{18}$ & $q_{7}$ & $q_{16}$ & $q_{17}$ & $q_{9}$ & $q_{8}$ & $q_{15}$ & $q_{10}$ & $q_{14}$ & $q_{13}$ & $q_{12}$ & $q_{11}$ \\

25 & \color{red}{$q_{26}$} & \color{red}{$q_{25}$} & \color{red}{$q_{24}$} &\color{black}{$q_{1}$}& \color{black}{$q_{27}$} & $q_{2}$ & $q_{22}$ & $q_{23}$ & $q_{4}$ & $q_{3}$ & $q_{21}$ & $q_{5}$ & $q_{20} $ & $q_{19}$ & $q_{6}$ & $q_{18}$ & $q_{7}$ & $q_{16}$ & $q_{17}$ & $q_{9}$ & $q_{8}$ & $q_{15}$ & $q_{10}$ & $q_{14}$ & $q_{13}$ & $q_{12}$ & $q_{11}$ \\

26 & \color{black}{$q_{26}$} & \color{red}{$q_{27}$} & \color{black}{$q_{24}$} &\color{black}{$q_{1}$}& \color{red}{$q_{25}$} & $q_{2}$ & $q_{22}$ & $q_{23}$ & $q_{4}$ & $q_{3}$ & $q_{21}$ & $q_{5}$ & $q_{20} $ & $q_{19}$ & $q_{6}$ & $q_{18}$ & $q_{7}$ & $q_{16}$ & $q_{17}$ & $q_{9}$ & $q_{8}$ & $q_{15}$ & $q_{10}$ & $q_{14}$ & $q_{13}$ & $q_{12}$ & $q_{11}$ \\

\hline
\end{tabular}
\label{tab:qft27}
\end{table}

Analysing data from Table.~\ref{tab:qft27}, we obtain that the CNOT-cost of the 27-qubit implementation on Falcon 5.11 architecture is equal to 957.

\section{Conclusion}
\label{sec:con}
In this work, we presented an approach to optimize the implementation of the quantum hashing algorithm when implemented on real quantum devices with specific architectures of qubits such as 16-qubit Falcon r4p and 27-qubit Falcon 5.11 architectures. The method focuses on reducing a CNOT-cost of the short circuit implemention of quantum hashing algorithm. We executed the circuit on simulators of real devices $\mathtt{ibmq}\_\mathtt{guadelupe}$ and $\mathtt{ibmq}\_\mathtt{auckland}$. The results show that the approach gives approximately 3 times advantage in a number of CNOT gates for the proposed circuit after transpillation comparing with the transpiled original circuit. 

We also applied our approach to QFT algorithm implementation on the aforementioned machines and obtained corresponding CNOT-costs.




\bibliographystyle{splncs04}
\bibliography{_ref}

\end{document}